\newtheorem{theorem}{Theorem}
\newtheorem{lemma}[theorem]{Lemma}
\newtheorem{corollary}[theorem]{Corollary}
\newdefinition{rmk}{Remark}
\title{Broadcast via Mobile Agents in a Dynamic Network: Interplay of Graph Properties \& Agents}
\author[1]{William K. {Moses Jr.}}\ead{wkmjr3@gmail.com}
\author[2]{Amanda Redlich}\ead{amanda_redlich@uml.edu}
\author[3]{Frederick Stock}\ead{frederickstock0@gmail.com}
\affiliation[1]{organization={Department of Computer Science, Durham University},
city={Durham},
country={UK}}
\affiliation[2]{organization={Department of Mathematics \& Statistics, University of Massachusetts Lowell},
city={Lowell, MA},
country={USA}}
\affiliation[3]{organization={Department of Computer Science, University of Massachusetts Lowell},
city={Lowell, MA},
country={USA}}
\newcommand{\MSG}{\mathcal{M}}
\newcommand{\smart}{source\xspace}
\newcommand{\dumb}{ignorant\xspace}
\newcommand{\broadcast}{\textsc{Broadcast}}
\newtheorem{definition}{Definition}
\begin{document}

\begin{keyword}
Mobile agents \sep mobile robots \sep broadcast \sep dynamic graph\sep dynamic network 
\end{keyword}

\begin{abstract}
We revisit the problem of \textsc{Broadcast}, introduced by Das, Giachoudis, Luccio, and Markou [OPODIS, 2020], where $k+1$ agents are initially placed on an $n$ node dynamic graph, where $1$ agent has a message that must be broadcast to the remaining $k$ ignorant agents. The original paper studied the relationship between the number of agents needed to solve the problem and the edge density of the graph. The paper presented strong evidence that edge density of a graph, or the number of redundant edges within the graph, may be the correct graph property to accurately differentiate whether $k= o(n)$ ignorant agents (low edge density) or $k = \Omega(n)$ ignorant agents (high edge density) are needed to solve the problem.

In this paper, we show that surprisingly, edge density may not in fact be the correct differentiating property. The original paper presents graphs with edge density $1.1\overline{6}$ that require $\Omega(n)$  agents, however, we construct graphs with edge density $> 1.1\overline{6}$ and develop an algorithm to solve the problem on those graphs using only $o(n)$ agents. We subsequently show that the relationship between edge density and number of agents is fairly weak by first constructing graphs with edge density tending to $1$ from above that require $\Omega(n/f(n))$ agents to solve, for any function $f(n) \to \infty$ as $n \to \infty$. We then construct an infinite family of graphs with edge density $< \rho$ requiring at least $k$ ignorant agents to solve \textsc{Broadcast}, for any $k>1$ and $\rho>1$.

Finally, we investigate different versions of connectivity as possible properties determining the number of agents.   We show that it is possible for a graph to have low (constant) edge connectivity but require a high (linear) number of agents to solve \textsc{Broadcast}. More generally, for any arbitrary $\lambda$, we construct a family of graphs on $n$ nodes with edge connectivity $(n-1)/\lambda$ requiring at least $k=n - 2 \lambda$ ignorant agents to solve \textsc{Broadcast}.   We then show that for any graph with vertex connectivity $\geq 3$ and minimum degree $\delta$, $k \leq \delta -2$ ignorant agents are unable to solve \textsc{Broadcast}. Finally, we show that for any graph containing a bond with $m$ edges, \textsc{Broadcast} is unsolvable when the number of ignorant agents $k \leq m-2$.
\end{abstract}

\maketitle

\section{Introduction}
\label{sec:intro}
In this paper, we study the specific problem of \broadcast{} in the larger paradigm of mobile agents on a dynamic graph. Mobile agents on a graph is a useful paradigm to study a plethora of real world problems. It can be used to model situations in which actual robots are trying to move through constrained spaces (e.g., a building with rooms and corridors connecting those rooms) and associated problems such as exploring every room in such a constrained space, and subsequently having all robots regroup (gather). It can also be used to model situations on networks where each agent is a virtual agent that moves from computer to computer in order to spread some information or perform some task on a subset of computers. The extension of this paradigm to dynamic graphs allows the scope of problems being studied to increase in a meaningful way as dynamicity in connections exists whether one is modeling the real world (e.g., earthquakes/fires changing the connectivity between rooms in a building) or a virtual one (e.g., two computers in a network become connected/disconnected because a cable is added/cut between them).

Within this paradigm, many problems have been studied, modeling an array of real world issues. One traditional set of lenses to view those problems through is the robot-configuration-oriented lens vs. graph-finding-oriented lens. Robot-configuration-oriented problems require the robots to achieve some specific configuration, such as gathering~\cite{CFPS12,CP02,DKLHPW11,P07}, scattering~\cite{Barriere2009,ElorB11,Poudel18,Shibata:2016}, pattern formation~\cite{SY99}, dispersion~\cite{Augustine:2018}, and
convergence~\cite{CP04}. Graph-finding-oriented problems require the robots to find something or some property of the graph, such as  exploration~\cite{Bampas:2009,Cohen:2008,Das13,Dereniowski:2015,Fraigniaud:2005,MencPU17}, treasure hunting~\cite{MP15}, triangle counting~\cite{CDM24}, dominating set~\cite{CMS23}, and maximal independent set~\cite{PBCM24}. In this framework, we study the problem of \broadcast{} as introduced in Das, Giachoudis, Luccio, and Markou~\cite{DGLM20}. In it, there is initially one agent with a message that is to be transmitted to a number of other agents located on the nodes of the graph. Under the traditional set of lenses, one might view this problem as a robot-configuration-oriented problem.

However, there is another set of lenses through which one may view the above problems. This set of lenses provides additional insight with respect to the larger context of distributed computation via message passing on traditional networks. This provides an additional motivation and understanding for why \broadcast{} is worth studying. Many problems such as gathering, dispersion, and exploration appear to be mobile agent-specific problems without immediate counter-parts in traditional distributed computing. Other problems, such as maximal independent set of a graph and triangle counting are fundamental graph problems typically studied under standard distributed computing message passing models such as LOCAL and CONGEST. 

Through this alternative set of lens, we see that the problem of \broadcast{}, as studied in this paper, acts as an interesting bridge between agent-specific problems and traditional distributed computing. In traditional distributed computing, \broadcast{} requires all nodes to participate in the broadcast of a message originating at one node. However, here, the number of agents is independent of the number of nodes and furthermore the agents can move around on the nodes. By studying such ``bridge'' problems, we may develop a deeper understanding of the links between the mobile  agents on a graph model and standard distributed computing.

In the paper that introduced \broadcast{} in this context, they focused on the question of the minimum number of agents that allow the problem to be solvable. A number of graph classes of varying density were presented with lower and upper bounds. As a result, they mention that this number ``apparently depends on the density of the underlying graph $G$ or, the number of redundant edges in $G$''. However, in this work, we show that with a finer-grained look at edge density, this surprisingly may not always be the case. We then extend our study to other types of graph properties to see if they might provide a better understanding of the minimum number of agents needed.

\subsection{Preliminaries \& Definitions}
\paragraph*{Graph Definitions}
Initially, we are given a graph $G(V,E)$, with node set $V$ and edge set $E$, which serves as the underlying graph. In each round $i \geq 1$, the adversary chooses some (possibly empty) subset of edges $E'$ from $G$ to remove for that round such that the graph $G_i(V,E \setminus E')$ is still connected. The set of all such graphs $G_i$, $i \geq 1$, is said to be the dynamic network $\mathcal{G}$.

The \textit{edge density} $\rho$ of a graph with $m$ edges and $n$ nodes is $m/n$ as is common~\cite{C00}.\footnote{Note that in~\cite{C00}, the definition of the density of the total graph is slightly different as it suited the particular problem looked at in that paper. In particular, the density of any subgraph is defined as it is here while the density of the overall graph is taken as the maximum density over all subgraphs, in line with the problem of densest subgraph that is studied in that paper.}

A \textit{generalized theta graph}, denoted by $\theta_{d_1, d_2, \ldots, d_{\ell}}$, consists of two fixed nodes joined by $\ell$ paths, where each path $i$ contains $d_i \geq 1$ nodes. See Figure~\ref{fig:theta-graph-ex} for an example illustration.

\begin{figure}[h!t]
    \centering
    \includegraphics[scale=.5]{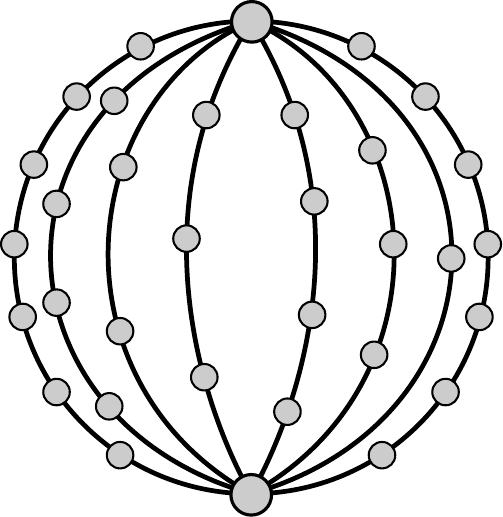}
    \caption{An example theta graph $\theta_{7,4,2,3,4,3,1,7}$}
    \label{fig:theta-graph-ex}
\end{figure}

A \textit{cut} in a graph is a set of edges whose deletion renders the graph disconnected.  The \textit{edge connectivity} of a graph is the size of the smallest cut of the graph. The \textit{vertex connectivity} of a graph is the smallest number of nodes whose deletion renders the graph disconnected. A \textit{bond} of a graph is a set of edges such that if all the edges in the set are removed the graph becomes disconnected, but if all but any one edge is removed the graph is still connected.  In other words, a bond is a minimal cut set of edges.
\paragraph*{Agent Description}
We consider that initially there is one agent 
with the message $\MSG$ and $k$ agents without the message. An agent with knowledge of $\MSG$ is termed a \textit{\smart} agent and an agent without knowledge of $\MSG$ is termed an \textit{\dumb} agent. To keep our analysis consistent with the original context of~\cite{DGLM20}, we maintain the same assumptions on agents used in that paper. We assume that agents have unique IDs and sufficient local memory for computation, and each agent starts at a unique node of $G$. For each $i \geq 1$, each agent has a global view of the graph $G_i$, i.e., each agent sees the complete topology of $G_i$ including any edges the adversary has deleted, and can see which agents occupy which nodes of the graph, including whether the agents are \smart or \dumb agents. As mentioned in~\cite{DGLM20}, these assumptions can be justified by considering that the adversary is very strong and so the agents need fairly strong capabilities in order to be able to solve the problem.

\paragraph*{Communication}
A \smart agent is able to transmit the message $\MSG$ to an \dumb agent when both are co-located at the same node in the same round. In other words, we assume local and not global communication. An \dumb agent becomes a \smart agent when it receives $\MSG$ from another \smart agent.

\paragraph*{Time}
Time proceeds in synchronous rounds. In each round $i \geq 1$, the following occurs in the given order.
\begin{enumerate}
    \item Adversary chooses a (possibly empty) subset of edges $E'$ and removes them from $G$ to create $G'(V, E \setminus E')$.
    \item Agents perform local computation and/or communicate with co-located agents and choose an edge to move through (or possibly decide to stay at the same node).
    \item Agents move through chosen edges.
\end{enumerate}


\paragraph*{Problem Statement}
The problem statement, as stated in~\cite{DGLM20}, is given below.

\begin{definition}[The \broadcast{} Problem]
    Given a constantly connected dynamic network $\mathcal{G}$ based on an underlying graph $G$ consisting of $n \geq 2$ nodes, a \smart agent that has a message $\MSG$ and $k \geq 1$ \dumb agents that are initially located at distinct nodes of the network, the goal is to broadcast this message $\MSG$ to all the agents.
\end{definition}

\subsection{Our Contributions}
In this paper, we focus on a fine-grained analysis of the edge density of a graph with $n$ nodes and its relationship to the number of agents needed to solve \broadcast{} on that graph. The work of Das, Giachoudis, Luccio, and Markou~\cite{DGLM20} used edge density to divide the classes of graphs into sparse and dense graphs and showed that there exist sparse graphs (rings) for which \broadcast{} can be solved using $o(n)$ agents and dense graphs (grids) for which \broadcast{} requires at least $\Omega(n)$ agents. In particular, they show that for rings, with edge density $1$, $k=2$ \dumb agents suffice to solve \broadcast{} and for $2 \times L$ grids, including a $2 \times 3$ grid with edge density $1.1\overline{6}$, $k \geq L = \Omega(n)$ \dumb agents are required to solve \broadcast{}. 

One might naturally assume that for graphs with edge density greater than $1.1\overline{6}$, $k = \Omega(n)$ agents are required to solve \broadcast{}. However, this is not the case. We show that there exists a class of graphs, commonly known as generalized theta graphs, that may have edge density larger than $1.1\overline{6}$ that in fact require only $k= o(n)$ \dumb agents to solve \broadcast{} (Theorem~\ref{the:alg-works} provides a loose bound for any generalized theta graph, Theorem~\ref{the:alg-istight} shows this bound is tight assuming each path has sufficient length).  This surprising result shows that edge density, by itself, may not be the dividing line between graphs that require $k=o(n)$ and $k=\Omega(n)$ \dumb agents. An example of such a construction is as follows. Consider a generalized theta graph with $\ell$ paths between two nodes, such that each path is of length $d$. The number of nodes $n$ is $\ell d + 2$ and the edge density is $1 + (\ell - 2)/(\ell d + 2)$, which can be rewritten as $1 + (n - 2 - 2d)/(nd)$. The number of \dumb agents needed to solve \broadcast{} is $\ell$. When $d = \omega(1)$, we see that $\ell = o(n)$. Let $n=100$ and $d = \log^*n = 4$. Then the given graph requires $o(n)$ \dumb agents and has edge density $1.225 > 1.1\overline{6}$.

A complementary question one might then ask is, whether in the range of edge density  $\rho \in (1, 1.1\overline{6})$, do all graphs with such an edge density require very few agents, i.e., $k \ll n$? We answer this question in the negative by constructing, for any arbitrary $f(n) \to \infty$, a family of graphs with edge density tending to $1$ from above (and therefore in particular $<\rho$) requiring $\Omega(n/f(n))$ agents (Theorem~\ref{the:small-edge-density-many-agents}).

One might then wonder if there is any relationship between the exact edge density of a graph and the exact number of \dumb agents needed to solve for that edge density. From~\cite{DGLM20}, we see that for trees, that have edge density $<1$, $k=1$ suffices. For rings, that have edge density exactly $1$, $k=2$ suffices. For a complete graph, that has edge density $(n-1)/2$, $k \geq n-2$ \dumb agents are needed. The question then becomes: are there specific values $\rho$ and $k$ such that any graph with edge density $< \rho$ requires $< k$ \dumb agents for some constant $k$? We answer this in the negative.  For any $k>0$ and any $\rho >1$, we construct an infinite family of graphs with edge density $<\rho$ requiring at least $k$ \dumb agents to solve \broadcast{} (Theorem~\ref{the:lollipops-density}). In other words, there is no threshold $\rho$ and number of \dumb agents $k$ for which the statement ``any graph whose edge density is below $\rho$ requires $< k$ \dumb agents'' is correct.

A natural next line of inquiry is to look towards other graph properties and ask about their relationship with the number of \dumb agents needed to solve \broadcast{}. A natural candidate is  connectivity, as connectivity is the defining constraint on the adversary.  We show that it is possible for a graph to have low (constant) edge connectivity but require high (linear) number of agents to achieve \broadcast{}. (Contrast this, for example, with the ring, which has edge connectivity $2$ and also requires $2$ agents.)  In fact, for arbitrary $\lambda$, we construct a family of graphs on $n$ vertices with edge connectivity $(n-1)/\lambda$ but requiring at least $n-2\lambda$  \dumb agents for \broadcast{} (Theorem~\ref{the:edge-connectivity}). 

We then look at vertex connectivity and develop an adversary strategy applicable to any graph with vertex-connectivity at least 3. This can be easily applied to many graphs to establish loose lower bounds on the required agents. In particular, for all $\beta \geq 3$, all graphs with vertex connectivity $\beta$ and minimum degree $\delta$ require at least $k \geq \delta - 1$ \dumb agents to solve \broadcast{} (Theorem~\ref{the:vertex-connectivity}).

Finally, we look at the graph property of bonds and show that for any graph which has a bond with $m$ edges, \broadcast{} is unsolvable for $k \leq m-2$ \dumb agents (Corollary~\ref{cor:bonds}).

\subsection{Technical Overview \& Challenges}
While the problem of \broadcast{} seems relatively simple, solving it requires involved algorithms with many small but crucial details. Note, that in the process of solving \broadcast{} the number of \dumb and \smart agents is constantly changing as \dumb agents become \smart agents.
Therefore any winning strategy must adjust for the changing ratio between these two classes of agents.
Hence the algorithm we present in Theorem~\ref{the:alg-works} uses 5 distinct phases, each of which applies when the agents are arranged meeting some condition. Each phase's prerequisite condition is dependent only on the existence of a few \smart agents, and the rest of the agents may be of either class, mitigating complications as \dumb agents are turned into \smart agents. Each phase may either progress to the next phase or repeat an earlier phase depending on the adversary's strategy. However, every time we change phases, measurable progress is made towards solving \broadcast{}.

Although~\cite{DGLM20} observes a potential link between edge density (i.e., ``redundant'' edges) and number of agents needed, we prove that is not always the case.  This is surprising, as more redundant edges means the adversary has more potential strategies available.  In fact one of the main technical challenges was to design graphs and \broadcast{} strategies that are robust against the adversary's many options.  The analysis of generalized theta graphs through the algorithm and lower bound in Section 2 acts as a useful starting point.  However, adjusting internal path lengths and number of paths to simultaneously achieve desired edge densities and agent requirements proved challenging; the generalized theta graph construction is inherently limited to edge densities close to one.  By carefully choosing functions for path length and multiplicity, we are nevertheless able to generate  families of graphs with low edge densities and a large number of required agents (Theorem \ref{the:small-edge-density-many-agents}).

To overcome the limitations of generalized theta graphs and develop constructions for arbitrary densities, we use subgraphs with higher or lower edge density as starting points.  Then we use structural graph techniques to combine these subgraphs, and scale them to create an arbitrary overall average density.  Furthermore, we do this in such a way that the subgraphs are locally well understood for \broadcast{}, and their local \broadcast{} strategies translates to a global strategy for the whole graph.  In particular, we start with cliques and paths for their uniquely high and low densities, and their well-understood properties for \broadcast{}.  We then combine carefully-selected instances of paths and cliques by identifying the two graphs at a single vertex; this preserves edge density and \broadcast{} strategies and allows for a careful analysis (Theorem \ref{the:lollipops-density}).

Finally, the edge and vertex connectivity results required more innovation.  As connectivity is a global (not local) property, it was not possible to merge subgraphs and use their average values.  However, it is possible to combine multiple identical cliques in such a way as to not change their edge-connectivity.  Then the strategies for adversary and \broadcast{} on a larger graph are exactly generated by the strategies for each subclique (Theorem \ref{the:edge-connectivity}).  Another approach is to use other graph theoretic properties such as bonds and degrees, and the high connectivity of complete graphs, to construct algorithms and counterexamples.  In particular, the adversary is able to define a strategy based on a cut set of vertices or edges, and construct a spanning tree of the remaining connected graph that prevents \broadcast{} (Theorems \ref{the:vertex-connectivity}).  

\subsection{Related Work}
Das, Giachoudis, Luccio, and Markou~\cite{DGLM20} introduce the problem of \broadcast{} via mobile agents on dynamic graphs. They study the problem for various classes of graphs and establish upper and lower bounds on the number of agents needed to solve \broadcast{}. In particular, they show that for trees, the problem is always solvable for any $k>0$. For rings, $k\geq 2$ \dumb agents are required except for small rings ($n < 5$). For cactus graphs with $c$ cycles, at least $c+1$ \dumb agents are needed. For grids, $\Omega(n)$ \dumb agents are needed. For the special case of $2 \times L$ grids, they present a lower bound of $\geq L-1$ \dumb agents and an almost matching algorithm requiring $k = L$ \dumb agents. For complete graphs, they show a matching lower bound and algorithm for $n-2$ \dumb agents. For hypercubes, they show a lower bound that $k$ should be almost half the number of nodes in order to solve \broadcast{}. For the special case of a 3-dimensional hypercube, they show a matching lower bound and algorithm that requires $k=n/2$ \dumb agents.

The above work is the only paper looking at the given problem in the given paradigm. If we look at other related work, it is either for a different paradigm or for different problems. When it comes to different paradigms, the problem of \broadcast{} and the related problem of information dissemination have been studied in the traditional distributed message passing setting for a variety of dynamic networks by Awerbuch and Even~\cite{AE84}, Casteigts, Flocchini, Mans, and Santoro~\cite{CFMS15}, Clementi, Pasquale, Monti, and Silvestri~\cite{CPMS09}, Kuhn, Lynch, and Oshman~\cite{KLO10}, and  O'Dell and Wattenhofer~\cite{OW05}.

When it comes to the same paradigm of mobile agents on a dynamic graph, other problems have been solved in this setting. There is usually a divide in the work depending on whether agents know information about the dynamicity of the graph in advance or not. The setting considered in this paper where agents do not know such information apriori is also called the live setting. Di Luna~\cite{DiLuna19} is a good survey on other problems studied under the live setting.

When looking further afield to other paradigms and other problems, there have been many other problems that have been studied in the dynamic graph setting. Examples include the problems of finding matchings~\cite{MertziosMNZZ20},
separators~\cite{FLUSCHNIK2020197,ZSCHOCHE202072}, vertex
covers~\cite{AKRIDA2020108,HammKMS22}, containments of
epidemics~\cite{ENRIGHT202160}, Eulerian tours~\cite{BumpusM23,MarinoS21}, graph
coloring~\cite{MarinoS22,MertziosMZ21}, network flow~\cite{AKRIDA201946},
treewidth~\cite{Fluschnik2020}, and cops and robbers~\cite{Erlebach-Spooner/20,MorawietzRW20}. One may refer to the survey by Michail~\cite{Michail16} for more information.

\section{Generalized Theta Graph Algorithm}
\label{sec:algo}
Consider any generalized theta graph $G = \theta_{d_1,d_2, \dots, d_l}$ with $l \geq 1$ paths, each of length $\geq 1$. In this section, we develop an algorithm that solves \broadcast{} with $k \geq l$ \dumb agents (and $1$ \smart agent).
The algorithm consists of an initial pre-processing step (Lemma~\ref{lem:preprocess}) followed by a ``main algorithm'' of 5 phases (Theorem~\ref{the:alg-works}).

The preprocessing step takes any initial configuration of agents and rearranges them such that \dumb agents are in some sense equally distributed among the paths of $G$.
This ensures that the main algorithm always has a somewhat consistent initial configuration of agents.
In the main algorithm, every phase $i \in [2,5]$, has some assumed precondition before the phase can be entered.
Each phase $i \in [1,5]$ ends if the precondition for the next phase is satisfied, resulting in the algorithm moving to the next phase. Phases $i \in [2,5]$ can also end if an \dumb agent becomes a \smart agent, resulting in the algorithm repeating an earlier phase $j \in [1,i-1]$.
At some point, there will be no more \dumb agents remaining in $G$.
At this point, the algorithm terminates and \broadcast{} has been solved.

The key insight of this strategy is as follows. If there is a \smart agent present at each pole of the $G$, then these agents can pick any path $P$ of $G$ and make any \dumb agent on $P$ a \smart agent. The adversary can only remove at most one edge of $P$ without disconnecting $G$. Each \smart agent can then walk from its pole, along $P$ towards the other pole. As the adversary cannot remove two edges of $P$, the \smart agents will eventually traverse all but possibly one edge of $P$, visiting every vertex of $P$, making every \dumb agent on $P$ a \smart.
Phase 5 specifically makes use of this fact to turn an \dumb agent into a \smart. The rest of the phases all slowly work toward creating this scenario, creating situations where the adversary must either allow an \dumb agent to meet a \smart, or allow the algorithm to progress to the next phase. 

Below is our first Lemma, which establishes the preprocessing phase of our algorithm.

\begin{lemma}\label{lem:preprocess}
    Regardless of the initial configuration of at least $l$ agents on a generalized theta graph $G = \theta_{d_1, d_2, \dots, d_l}$, a configuration can always be reached such that
    one path has two \smart agents and every other path has a distinct \dumb agent.
\end{lemma}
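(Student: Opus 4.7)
The plan is to transform any initial configuration into the target configuration in three sub-phases: (1) create a second \smart agent by arranging a meeting between the initial \smart agent and some \dumb agent, (2) station the two \smart agents one at each pole $u$ and $v$ (so they sit on every path, allowing any specific path to be designated as ``the path with two \smart agents''), and (3) distribute $l-1$ remaining \dumb agents so that each of the other $l-1$ paths contains at least one distinct \dumb agent.

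The key structural observation underlying every sub-phase is that on any path $P_i$ the adversary can remove at most one edge per round, since removing two edges would isolate the interior of $P_i$ from both poles and violate constant connectivity. Thus globally at least one $u$-$v$ path remains fully intact per round, every agent on a path has at least one permitted step along that path, and every agent at a pole has at least one intact path to traverse. Consequently, an agent following the strategy ``step toward pole $u$ when that edge is present, otherwise toward $v$'' reaches some pole within at most $d_i$ rounds (where $d_i$ is its current path's length), and an agent at one pole can traverse to the opposite pole in bounded time by taking one step along a shortest $u$-$v$ path in the current graph $G_t$ each round.

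For sub-phase~(1), I would run the pole-reaching strategy on the initial \smart agent and on an arbitrarily chosen \dumb agent in parallel; if they land at the same pole they meet, and otherwise one of them executes a pole-to-pole traversal to meet the other. Sub-phase~(2) follows by reusing the pole-to-pole traversal to place one of the two \smart agents at each pole. For sub-phase~(3), each remaining \dumb agent that is not already on a designated uncovered path is pushed via its pole-reaching strategy, and if needed a pole-to-pole traversal, onto an unoccupied path, strictly decreasing the number of uncovered paths at each step.

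The main obstacle will be twofold: first, rigorously bounding the time needed for the pole-to-pole traversal against an adaptive adversary that can re-cut edges each round, leveraging the fact that only $l-1$ paths are cuttable per round so the traversing agent always has a route forward; second, preventing more than one \dumb agent from being converted to \smart during the process, which may require the \smart agent stationed at a pole to temporarily step onto a different edge at the same moment a \dumb agent enters the pole so that the two swap positions rather than meet. Coordinating these careful routing choices while still guaranteeing progress against the adversary is the delicate technical core of the argument.
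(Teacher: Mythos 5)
Your plan has a genuine gap: it relies on routing \emph{individual} agents to designated places (``pole-reaching,'' ``pole-to-pole traversal,'' pushing a specific \dumb agent onto a specific empty path), but the adversary can prevent any single agent from making guaranteed progress. The adversary may delete one edge from each of $l-1$ paths every round while keeping the graph connected, so it can always cut the edge immediately ahead of a lone agent on whatever path that agent currently occupies. Under your rule ``step toward $u$ if that edge is present, otherwise toward $v$,'' the adversary alternately cuts the edge ahead in each direction and oscillates the agent between two adjacent internal vertices forever; under ``step toward $u$ else wait,'' it freezes the agent. The same obstruction kills the pole-to-pole traversal: each time the agent enters a path, the adversary cuts one edge of that path in front of it. So none of your three sub-phases is achievable as described. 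Indeed, if sub-phase (2) (a \smart agent stationed at each pole) were reachable by a simple preprocessing routine, the entire five-phase main algorithm of the paper would be unnecessary, since that is exactly the precondition of its final phase; reaching that configuration is the hard part of the whole theorem, not a warm-up step.

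The paper's proof avoids this by never promising that a \emph{particular} agent reaches a \emph{particular} place. It first spreads the agents so every path holds one \dumb agent and the \smart agent $s$ sits on an internal vertex with a \dumb agent $d$ between it and one pole $S$; then it moves \emph{all} agents toward the other pole $N$ simultaneously. Since at least one path is fully intact each round, the sum of distances from \dumb agents to $N$ strictly decreases, so \emph{some} \dumb agent reaches $N$ --- we just cannot choose which. The case analysis then exploits the adversary's dilemma on the path containing $s$ and $d$: cutting between them lets $s$ reach $N$ (where it meets whichever \dumb agent arrives), while cutting ahead of $s$ lets $d$ catch up to $s$. Either way two \smart agents end up identified with one path, which is all the lemma claims. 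If you want to salvage your write-up, replace every individual-routing step with this kind of aggregate potential argument plus a two-case analysis of the adversary's cut on the \smart agent's path. (Your worry about converting too many \dumb agents is a minor side issue by comparison; the substantive defect is the routing assumption.)
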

\begin{proof}

First, we arrange the agents such that there is one path $P$ of $G$ that has the initial \smart agent on one of its internal vertices and every path (including $P$) has one distinct \dumb agent.

Suppose this condition is not met, there are only two reasons it might not be. The \smart agent is on a pole, and/or some path has at least two \dumb agents on its internal vertices. 

In the first case, the \smart agent can be moved to at least one path of $G$, as the adversary cannot disconnect the pole from every path of $G$. In the second case, assume we have two \dumb agents on one path $P$. Each agent is closer to one endpoint of $P$ than the other agent, move each agent towards that endpoint. The adversary cannot stop one of them from leaving $P$. For any path with more than one \dumb agent, we repeat this process until there is only one \dumb agent. These processes give us the desired configuration.

At this point we do the following:
Again, let $P$ be the path of $G$ that contains the \smart agent, call it $s$ and let $d$ be the \dumb agent that is identified with $P$. Let $N$ and $S$ (``north'' and ``south'') represent the two poles of $G$. Without loss of generality, assume $d$ is closer to $S$ than $s$ is ($s$ is ``sandwiched'' by $N$ and $d$). We move every agent in $G$ (\dumb and \smart) towards $N$, forcing one of two outcomes. Either agent $s$ and an additional ignorant agent reach $N$ (where it becomes a source), or $s$ never reaches $N$ but $d$ meets it and becomes a source.

Note that at every timestep at least one path will have no edge removed (otherwise the graph would be disconnected). Therefore at least one path's ignorant agent may make progress towards $N$. Thus the sum of distances between ignorant agents and $N$ is monotonically decreasing, and eventually one ignorant agent must reach $N$.

If the adversary chooses to remove an edge between $s$ and $d$, then $s$ and $d$ will never meet on $P$. But $s$ will progress towards $N$, due to our choice that $d$ is further from $N$ than $s$. As we just noted, another \dumb agent will also reach $N$, creating the first outcome, two agents meet at $N$. If the adversary instead decides to remove an edge from $P$ that blocks $s$ from reaching $N$, then $s$ will eventually get ``stuck'' somewhere on $P$. Hence, $d$ will eventually catch up to $s$ as the adversary cannot remove another edge from $P$ without disconnecting $G$, giving us the latter outcome. 
In either case we have two \smart agents that can be identified with the same path.
\end{proof}

We now present Theorem~\ref{the:alg-works}, proving solvablity of \textsc{Broadcast} on generalized theta graphs by presenting a winning strategy. 
Clearly, this will require some detailed descriptions of agents moving on these graphs, therefore, we start by creating some notation.
Any path $P$ of a generalized theta graph $G$ augmented with the fixed nodes $N, S$, can be written in the form, $N, v_1, v_2, \dots, v_{m}, S$.
Where its two endpoints $N$, $S$ are the ``poles'' of $G$, and are shared by every path of $G$.
We call the vertices $\{v_1, v_2, \dots, v_m\}$ the \emph{internal} vertices of $P$.
We call $N$ and $S$ the \emph{poles} of $G$.
An agent is \emph{identified} with $P$ if it is currently on some vertex of $\{N, v_1, v_2, \dots, v_{m}, S\}$. 
For any agent on an internal vertex of a path this is well-defined but it is ambiguous for any agent on $N$ or $S$.
We say an agent may be identified with only one path, so if an agent is on $N$ or $S$, then there is one unique path it is identified with.
If an agent is on a pole we may decide to change its identified path, this will be explicitly stated, otherwise we assume its identified path remains the same.
We may refer to an agent and its identified path by saying the agent is on a path, or the path the agent is on.
If no agent is identified with a path, then that path is \emph{empty}.

\begin{figure}
    \centering
    \includegraphics[width=1\linewidth]{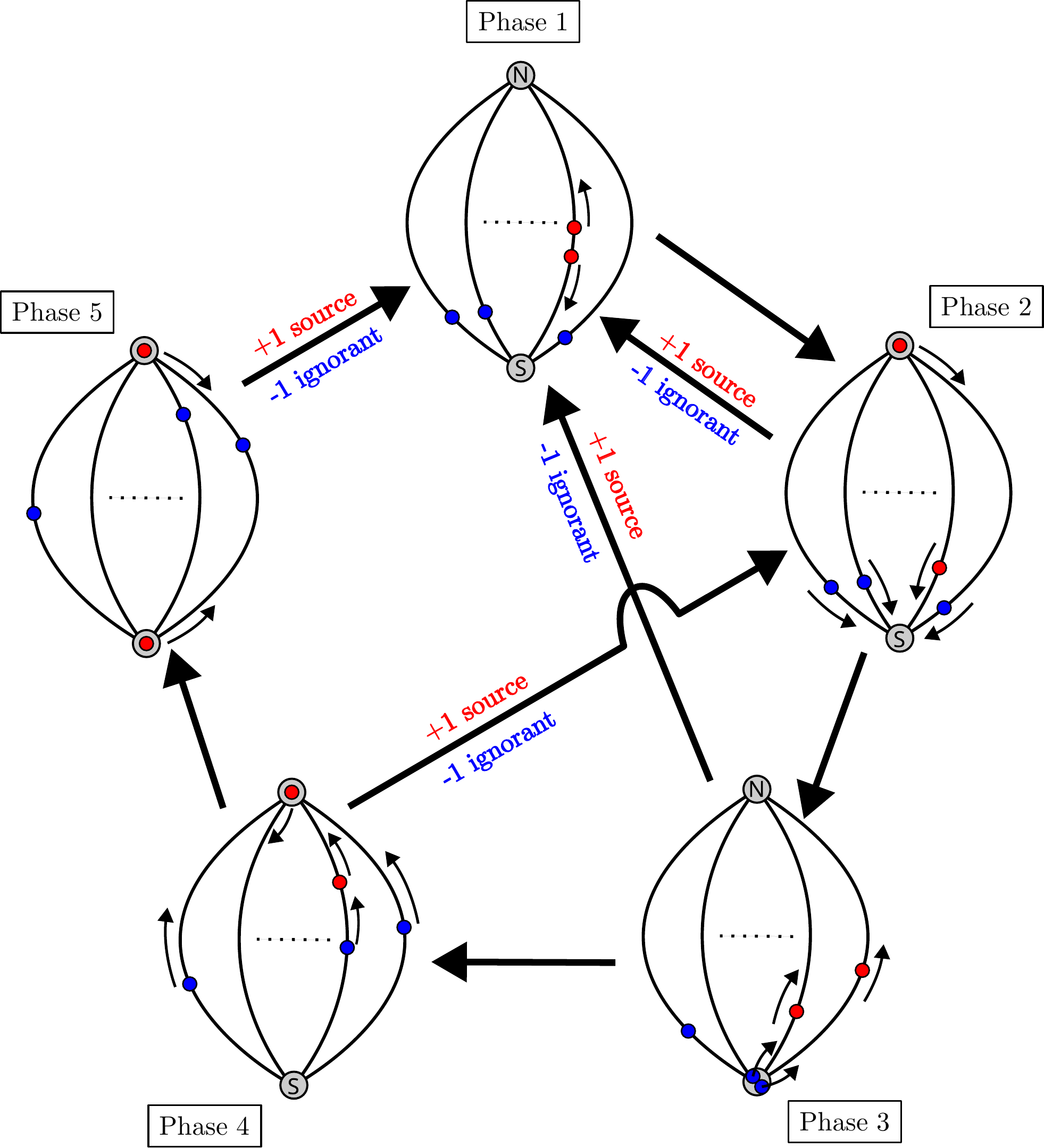}
    \caption{Operation of the algorithm presented in Theorem~\ref{the:alg-works}, blue nodes are \dumb agents, and red are \smart agents. Arrows possible phase transitions, with labels indicating if the quantity of either class of agent changes.}
    \label{fig:beachball-alg}
\end{figure}

\begin{theorem}
\label{the:alg-works}
    Given a generalized theta graph $G = \theta_{d_1, d_2,...,d_l}$ with $l \geq 1$ paths, each of length $\geq 1$. \textsc{Broadcast} can be solved on $G$ with $k\geq l$ \dumb agents.
\end{theorem}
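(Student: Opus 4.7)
The plan is to apply the preprocessing lemma to reach a canonical starting configuration, then define five phases that progressively move source agents onto the poles $N$ and $S$ of $G$, where the sweeping argument sketched in the ``key insight'' paragraph can be invoked to convert ignorant agents. Each phase has a precondition stated only in terms of where certain source agents sit, so that newly converted ignorant agents never invalidate an ongoing phase: the algorithm either advances to the next phase, or, if an ignorant agent becomes a source along the way (strictly decreasing the ignorant count), restarts at a strictly earlier phase. Since ignorant agents can only be converted finitely many times, this yields termination.

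The concrete structure I would set up is as follows. Phase 1 is the output of Lemma~\ref{lem:preprocess}: two source agents share a path $P^*$, and every other path contains a distinct ignorant agent. In Phase 2 I would walk the two source agents in opposite directions along $P^*$; since only one edge of $P^*$ can be deleted per round, at least one source reaches a pole, giving the precondition ``one source agent sits on a pole.'' Phase 3 then tries to send a second source agent onto the \emph{other} pole; if the adversary repeatedly blocks its progress along the currently chosen path $P$, the ignorant agent identified with $P$ is forced to meet it (the adversary cannot delete two edges of $P$ without disconnecting $G$), producing an extra source and letting us restart in a lower phase with more sources. Phase 4 handles the remaining subtlety of getting a source agent to leave a pole and commit to a path other than $P^*$ without ``wasting'' the pole coverage; again the adversary either concedes or creates a forced meeting with an ignorant agent. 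Phase 5 is the payoff: with one source at each pole and some nonempty path $P'$ still containing an ignorant agent, the two poles' source agents walk toward each other along $P'$. Because the adversary can remove at most one edge of $P'$ per round, the union of the two swept arcs covers every internal vertex of $P'$, so the ignorant agent on $P'$ is necessarily visited and becomes a source.

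For each phase I would verify three things: (i) the precondition of the phase is actually achievable from the postcondition of the previous phase; (ii) while the phase is executing, the moves prescribed for the source agents are always legal in $G_i$ (the current graph is connected and we only move along present edges); and (iii) every phase has finite duration in rounds, using the monotone potential functions already employed in Lemma~\ref{lem:preprocess}, e.g.\ sum of distances of designated source agents to their target poles. Combining (iii) with the outer loop, which strictly decreases the number of ignorant agents each time Phase 5 fires or an early conversion occurs, gives overall termination with all $k$ ignorant agents eventually informed.

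I expect the main obstacle to be the phase transitions when an ignorant agent is converted unexpectedly, since the identification between paths and their distinguished agents (inherited from Lemma~\ref{lem:preprocess}) must be re-established before restarting at an earlier phase. The careful bookkeeping is to show that after any mid-phase conversion one can, in at most a few rounds of re-routing, restore the ``one distinguished ignorant per non-$P^*$ path'' invariant used by the phases, by piggybacking on the preprocessing argument that no path can host two distinct ignorant agents indefinitely. Once this invariant-preservation claim is proved, the five-phase progression above closes the argument for all $k \geq l$.
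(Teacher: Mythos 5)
Your overall architecture matches the paper's: preprocess via Lemma~\ref{lem:preprocess}, run phases whose preconditions depend only on where a few source agents sit, restart at an earlier phase whenever an ignorant agent is converted, and finish with the two-pole sweep of a single path (which is exactly the paper's Phase 5 and its ``key insight''). However, there is a genuine gap in your middle phases. In your Phase 3 you claim that if the adversary repeatedly blocks the second source agent $s'$ from reaching the other pole along its path $P$, then the ignorant agent identified with $P$ is forced to meet $s'$ because only one edge of $P$ can be deleted. This dichotomy is false as stated: the adversary deletes the one allowed edge of $P$ \emph{between} $s'$ and the ignorant agent, positioned so that the ignorant agent is on the pole side of the cut. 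Then $s'$ is stuck, no meeting occurs, and the ignorant agent simply walks to the pole still uninformed. Nothing in your accounting registers this as progress, so the adversary can stall you indefinitely at this step.

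The paper's proof closes exactly this hole by making ``ignorant agents accumulate at a pole'' an explicit intermediate milestone: its Phase~2 drives everyone toward $S$ and shows the adversary must either concede a meeting or let \emph{two} ignorant agents reach $S$; its Phase~3 then reverses direction, sending those two ignorant agents from $S$ up the two paths that contain internal sources, so that each source is now sandwiched between its pursuing ignorant agent and $N$ --- only then does the single-deletable-edge argument force either a meeting or a source arriving at $N$. The sandwich orientation is the whole point, and your proposal never establishes it for the second source. A related omission: once agents congregate, some paths become empty, which breaks the potential argument ``at least one path is untouched per round, hence some designated agent progresses'' that you inherit from Lemma~\ref{lem:preprocess}; the paper's Phase~4 exists precisely to balance the number of sources parked on $N$ against the number of empty paths before attempting the final crossing to $S$. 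Your Phase~4 gestures at ``cleanup'' but does not identify or resolve either issue, so as written the argument does not go through.
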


\begin{proof}

    We present a 5-phase algorithm to solve \broadcast{}. 
    Each phase is initiated when a prerequisite condition is met, each listed below:
    \newcounter{phasecounterfirst}
    \renewcommand{\thephasecounterfirst}
          {({\ifcase\value{phasecounterfirst}\or \textbf{Phase 1}\or \textbf{Phase 2}\or \textbf{Phase 3}\or \textbf{Phase 4}\or \textbf{Phase 5}\else junk\fi\relax})}
    
    \begin{list}{\thephasecounterfirst}{\usecounter{phasecounterfirst}}
        \item A path has two source agents.
        \item A \smart agent on one pole and there is no empty path.
        \item Two \dumb agents on one pole and two paths have an \textbf{internal} \smart agent.
        \item At least one \smart agent $s$ on one pole, a different \smart agent $s'$ on a path, and at least one empty path.
        \item A \smart agent on each pole.
    \end{list}

    The algorithm will start in Phase 1, and then progress through each phase, in a non-linear manner, as demonstrated by Figure~\ref{fig:beachball-alg}.    
    We maintain that every phase will either (1) progress to a phase of higher index (i.e. phase 1 to phase 2) or (2) an \dumb agent will transform into a \smart agent while returning to a lower index phase (i.e. phase 4 to phase 2). 
    Therefore, this algorithm will eventually terminate with every \dumb agent being turned into a \smart agent.
    The phase transitions are demonstrated in Figure~\ref{fig:beachball-alg}, with an example of how the agents are moved in each.
    This process ends when all agents verify that there are no \dumb agents left on $G$.

    To simplify our algorithm, we assume that there are exactly $l$ \dumb agents (i.e, $k=l$). If there are more than $l$ \dumb agents the following algorithm could be easily adapted. At the start of the algorithm, take a random subset of $l$ \dumb agents and pretend the rest do not exist. Then run the algorithm with these $l$ \dumb agents. After the entirety of this subset has been turned to \smart agents, ``forget'' those agents and draw a new subset of $l$ \dumb agents from those remaining. If there are less than $l$ remaining \dumb agents, select \smart agents to make up the difference. Then start the algorithm at the first applicable phase.

    We also assume that the agents are dispersed across $G$ as described by Lemma~\ref{lem:preprocess}. This is a trivial assumption, if the agents are not arranged in this manner, just apply the lemma. 
    Therefore, we can assume the existence of two \smart agents which we call $s$ and $s'$.

\newcounter{phasecounter}
\renewcommand{\thephasecounter}
      {({\ifcase\value{phasecounter}\or \textbf{Phase 1}\or \textbf{Phase 2}\or \textbf{Phase 3}\or \textbf{Phase 4}\or \textbf{Phase 5}\else junk\fi\relax})}

\begin{list}{\thephasecounter}{\usecounter{phasecounter}}

    \item\label{phase:1} \textbf{A path has two source agents.}
    
    Recall we have at least two \smart agents $s$ and $s'$.
    The objective of this phase is to move to \ref{phase:2}. This requires $s$ and $s'$ be arranged so one is on a pole of $G$ and the other is on an internal vertex of some vertex of a path of $G$. If both \smart agents are on $N$, move one \smart agent off of $N$ onto any path. Otherwise, both agents are still on $P$, we move the two source agents in opposite directions, one towards $S$ and one towards $N$. Since the adversary can only remove one edge from $P$ they can't stop one of them from reaching a pole. When we have one agent on a pole, the algorithm can progress to \ref{phase:2}.

    \item\label{phase:2} \textbf{A \smart agent on one pole and there is no empty path.}
    
    Now we can assume, with no loss of generality, there is one \smart agent $s$ on $N$, and there is some path $P$ that has a different \smart agent $s'$. Move every agent (\smart and \dumb) toward $S$. Fix a path from $N$ to $S$ containing an \dumb agent, move $s$ along this path. One of two things happens, either a \smart agent meets an \dumb agent or an \dumb agent reaches $S$. We now use an analysis almost symmetric to that in Lemma~\ref{lem:preprocess} to show at least two \dumb agents will reach $S$, or a \smart agent will meet with a \dumb agent. 
    
    If the adversary wishes to prevent $s$ from reaching the \dumb agent on the chosen path, it must remove an edge between the two agents, guaranteeing that the \dumb agent will reach $S$. If the adversary does not, the \dumb agent will be turned into a \smart agent by $s$, in the latter case the algorithm returns to \ref{phase:1}, with one more \smart agent, and one less \dumb agent. Therefore, we assume the former case, the \dumb agent reaches $S$.
    By the connectivity constraint, the adversary must leave at least one path unaltered at each timestep, and as each path has at least one agent, an additional \dumb agent will reach $S$. There are now two \dumb agents on $S$, and two paths have \smart agents, the algorithm progresses to \ref{phase:3}. 

    \item\label{phase:3} \textbf{Two \dumb agents on one pole and two paths have an \textbf{internal} \smart agent.}
    
    We have two \dumb agents $d$ and $d'$ on a pole, assume $S$. We start by moving every agent towards $N$. Every agent but $d$ and $d'$ moves along the path it is identified with. Move $d$ on the path that contains $s$ and $d'$ on the path with $s'$. We claim either a \smart agent meets an \dumb agent on a path, or a \smart agent reaches $N$. As we assumed there were $\ell$ \dumb agents and one \smart agent and there are two paths with two agents, there is a path of $G$ that is empty. Therefore we can no longer assume at every time step one agent will progress towards $N$. However, on the two paths that have both an \dumb and a \smart agent, the adversary must make a choice. They allow the \smart agent to move towards $N$, or the \dumb agent moves closer to the \smart agent. If they do allow one of the \smart agents to move towards $N$, it will eventually reach $N$, and the algorithm progresses to \ref{phase:4}. On the other hand, if $s$ (resp., $s'$) is prevented from reaching $N$, then $d$ (resp., $d'$)  cannot be prevented from reaching $s$ ($s'$), as doing so would require removing two edges from one path. In this case either $d$ or $d'$ will eventually become a \smart, and we can return to phase~\ref{phase:1}.

    \item\label{phase:4} \textbf{At least one \smart agent $s$ on one pole, a different \smart agent $s'$ on a path, and at least one empty path.}
    
    Let $s$ be at pole $N$. Further let $s'$ be the \smart agent on a path.
    This phase will either progress to \ref{phase:2} or \ref{phase:5}.
    
    First, we ensure that the number of \smart agents on $N$ is greater than or equal to the number of empty paths.
    To this end, suppose there are two empty paths and only one \smart agent on $N$. 
    By the pigeonhole principle, there is at least one path with two agents, neither of which is $s$.
    We can use the process of \ref{phase:1} to move one of those agents to a pole and on to another path. 
    When this process moves one agent onto a pole, there are four cases. 
    (1), a \smart agent moves onto $S$, (2) a \smart agent moves onto $N$, (3) a \dumb agent moves onto $S$, and (4) a \dumb agent moves onto $N$.
    In the first case, we have a \smart agent on both $S$ and $N$, and we move to \ref{phase:5}. 
    In the other three cases, either a \dumb agent moves to $S$, and we can identify it with an empty path, or, an agent other than $s$ reaches $N$, putting an additional \smart agent on $N$ (as $s$ would turn an \dumb agent into a \smart agent). 
    Therefore, by repeated applications of this argument, we can assume the number of \smart agents on $N$ is equal to the number of empty paths.

    Now that the number of \smart agents on $N$ outnumber the empty paths, we move to the second phase of our algorithm. We can assume $s'$ is on a path with an \dumb agent, such that $s'$ is \textit{sandwiched} between the \dumb agent and $N$.
    If $s'$ is on a path with a \dumb agent but it is not sandwiched, then this \dumb agent is between $s'$ and $s$ (as $s$ is assumed to be on $N$). Both $s$ and $s'$ can move towards each other to eventually reach the \dumb agent and make it a \smart agent, we then return to \ref{phase:1}. 
    If there is no \smart agent that shares a path with an \dumb agent, then by the assumed number of \dumb agents, there must be a path with two \dumb agents.
    One of these agents can be moved to a pole using the technique from~\ref{phase:1}, and one will either reach
    $N$, becoming \smart, and we can move to \ref{phase:1}. Otherwise, one will reach $S$, giving us our sandwiched condition.
    Therefore we may assume $s'$ exists and is sandwiched by an \dumb agent and $N$.

    For the rest of the description of this phase, we describe the movements of one \smart agent $s$ on $N$, on one empty path. Otherwise, if there are multiple \smart agents on $N$ and empty paths we move the extra \smart agents analogously to how we move $s$ in the following. 
    Now, move every agent besides $s$ towards $N$. Move $s$ towards $S$ along an empty path, if there is no empty path, pick one arbitrarily, prioritizing paths that do not contain a \smart agent. Either $s$ reaches $S$ and another \smart agent, $s'$, reaches $N$. We show the adversary cannot block both from happening. 
    
    $s'$ has a \dumb agent $d'$ on its path, so to prevent $s'$ from reaching $N$ would allow $d'$ to meet $s'$. Therefore we assume $s'$ reaches $N$, as otherwise we could progress to \ref{phase:1}. By preventing $s$ from reaching $S$, the adversary would be unable to prevent some \dumb agent from reaching $N$, as preventing this would require removing an edge from every path, disconnecting $G$. Now, by the assumption that $s$ will also reach $N$ this \dumb agent will become a \smart, placing two \smart agents at $N$, and we can progress to~\ref{phase:2}. 

    Therefore, either we make additional \smart agents and repeat the process at the corresponding step, or we successfully move one \smart agent to both poles of $G$, moving to \ref{phase:5}. 

    \item\label{phase:5} \textbf{A \smart agent on each pole.}
    Pick a path containing a \dumb agent. Move the \smart agents along this path towards this \dumb agent.
    One will eventually reach the \dumb agent, as the adversary could only prevent this by removing two edges simultaneously from the path, which would disconnect $G$. When the \dumb agent becomes a \smart agent, we progress to \ref{phase:1}.
\end{list}

\end{proof}

We can show that when the path lengths of $G$ are all at least 3, the previous theorem presents a tight bound.

\begin{theorem}\label{the:alg-istight}
    Given a generalized theta graph $G = \theta_{d_1, d_2,...,d_l}$ with $l \geq 1$ paths, each of length $\geq 3$. \textsc{Broadcast} can be solved on $G$ if and only if there are $k\geq l$ \dumb agents.  
\end{theorem}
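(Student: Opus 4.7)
The plan is as follows. The ``if'' direction is immediate from Theorem~\ref{the:alg-works}, so what remains is the ``only if'' direction: constructing, for every $l \ge 2$ and every choice of path lengths $d_1,\ldots,d_l \ge 3$, a specific initial configuration together with an adversary strategy witnessing that $k = l-1$ \dumb agents do not suffice. The case $l = 1$ is trivial, since the problem setup already requires $k \ge 1$ and a single path with one \dumb and one \smart is handled by Theorem~\ref{the:alg-works}.

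The starting configuration I propose places the \smart agent at pole $N$ and, for each $i \in \{1,\ldots,l-1\}$, one \dumb agent at the middle internal vertex of $P_i$; since $d_i \ge 3$, this vertex exists and lies at graph-distance at least $2$ from both poles. The remaining path $P_l$ starts empty of agents and will serve the adversary as a ``connectivity spine''. The adversary's strategy will be reactive. In each round, it first simulates the (deterministic) algorithm to identify the vertex $v$ where the \smart intends to end up, and then chooses between two blocking options: (A)~permit the move to $v$ and remove each edge along which a \dumb agent would arrive at $v$, or (B)~remove the \smart's own movement edge to force it to remain at its current vertex $u$ and remove each edge along which a \dumb agent would arrive at $u$. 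A short counting argument using $k = l-1$ (with $a$ and $b$ the numbers of \dumb agents heading to $v$ and to $u$ respectively, so $a+b \le l-1$) will show that the cheaper of the two options costs at most $l-1$ edge removals, matching the cycle rank of $\theta_{d_1,\ldots,d_l}$; and by always preserving whichever path currently has no \dumb agent on its internal vertices (at least one such path always exists by pigeonhole since $k<l$), the removals will keep $G$ connected.

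The principal obstacle I foresee is arguing that this purely local, reactive rule maintains the invariant ``no \smart and \dumb agent ever share a vertex'' over \emph{all} rounds rather than just a single one, in particular handling scenarios in which a \dumb agent takes a long detour through the far pole $S$ in order to sneak up on the \smart from behind. I expect the crux to be a symmetric ``near-pole'' argument: whenever the \smart is at or adjacent to a pole, the $l-1$ pole-incident edges leading into paths carrying \dumb agents suffice to block every \dumb agent from entering that pole; and whenever the \smart is in transit between poles, it must travel along the current spine path, which by construction contains no \dumb agent to meet. For the base case $l = 2$ (a ring on at least $8$ vertices), the argument will reduce to the known lower bound from~\cite{DGLM20}.
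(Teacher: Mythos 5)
Your overall structure (sufficiency from Theorem~\ref{the:alg-works}, adversary construction for $k=l-1$) matches the paper, but the adversary strategy you sketch has two genuine gaps. First, it is incompatible with the round structure of the model: in each round the adversary removes edges \emph{before} the agents compute and choose their moves, so it cannot observe ``the vertex $v$ where the \smart intends to end up'' and then react, nor can it ``force'' the \smart to remain at $u$ by deleting a single edge --- the \smart sees the deletion and simply takes another incident edge (it has $l$ of them at a pole). A valid adversary must map the current configuration to a removal set to which the agents then respond; your options (A) and (B) are not of this form, and once you quantify over all possible agent responses the counting bound $a+b\le l-1$ no longer describes what must be removed. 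Note also that keeping a theta graph connected is not the same as removing at most $l-1$ edges: you must remove at most one edge per path and leave some path intact, and nothing in your local rule prevents two blocking edges from landing on the same path (e.g.\ a \smart on an internal vertex with \dumb agents approaching from both sides along that path).

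Second, the assertion that a \smart agent ``in transit between poles must travel along the current spine path'' is false: the algorithm controls the \smart and can deliberately send it into a path occupied by a \dumb agent. This is precisely the crux, and it is what the paper's proof confronts directly. The paper's adversary is positional: it places one agent on an internal vertex of each path, removes on each \dumb-occupied path one edge separating the \dumb agent from its nearer pole, and, when the \smart enters an occupied path, instead cuts between the \smart and that \dumb agent --- deliberately \emph{releasing} the \dumb agent toward a pole and onto another occupied path, where the same blocking pattern restarts. This ``indefinitely repeating chase'' is the multi-round invariant. Your proposal correctly identifies this as the principal obstacle but does not supply the mechanism resolving it, so the only-if direction is not yet established.
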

\begin{proof}
    Clearly, the fact that \broadcast{} can be solved with at least $k = l$ \dumb agents is established by Theorem~\ref{the:alg-works}. Therefore we establish an adversary strategy when there are $k < l$ many \dumb agents.
    Assume there are $l - 1$ \dumb agents, as this is the hardest case for the adversary.
    There are exactly $l$ many agents, so the adversary can initially place exactly one agent on each path, specifically on an internal vertex of each path. 
    Then the adversary prevents every \dumb agent from reaching either pole of $G$, by removing an edge from each path between the \dumb agent and its closest pole. 
    The only agent that can therefore leave its path is the \smart agent. 
    However if the \smart agent does reach a pole and then enter another path with an \dumb agent on it, the adversary can prevent them from meeting.
    The adversary can then remove an edge between the \smart and \dumb agent. 
    This would then allow the \dumb agent to leave the path and reach a pole. 
    However, this then produces an indefinitely repeating process, one agent reaches a pole and moves to a path with another agent. 
    These agents will not be able to reach each other, and one may reach a pole.
    But now there is one agent on a pole which then moves to a path with a different agent\dots 
    During this loop the adversary can prevent all other agents from reaching a pole and interfering with this process.
\end{proof}

\section{Low Density Graphs Requiring Many Agents}
\label{sec:low-density-many-agents}

Here we give a construction for a family of generalized theta graphs with low density but high agent requirements, where both quantities are in terms of $n$, the number of vertices.  As the number of vertices grows, the edge density decreases but the required number of agents grows (albeit at a slower rate than the number of vertices).  Most specifically, it is a family of graphs such that the edge density is $1+\epsilon(n)$ and the number of agents required is $\delta(n)$ for functions $\epsilon(n) \to 0$ and $\delta(n) \to \infty$.  We accomplish this by choosing the dimensions of a generalized theta graph strategically.

\begin{figure}
    \centering
    \includegraphics[width=0.5\linewidth]{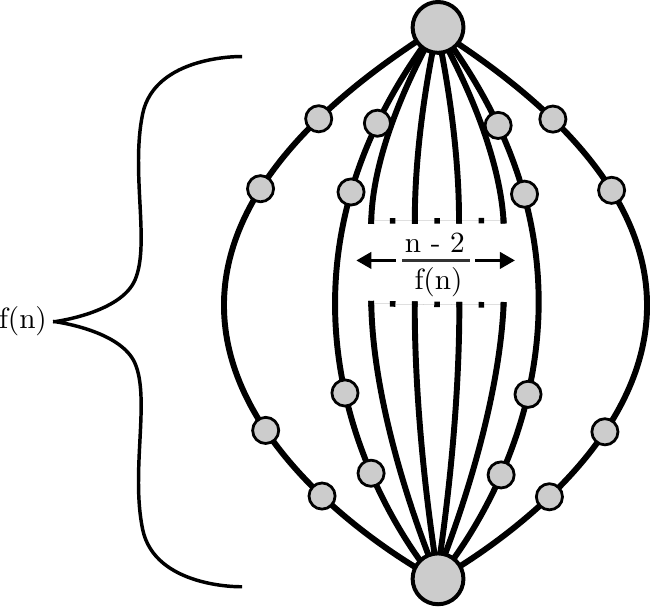}
    \caption{An example of the construction from Theorem~\ref{the:small-edge-density-many-agents}. A theta graph with $(n-2)/f(n)$ paths of $f(n)$ vertices each.}
    \label{fig:thrm-4}
\end{figure}

\begin{theorem}\label{the:small-edge-density-many-agents}
    For any function $f(n) \to \infty$, there exists a family of graphs with $n$ vertices and $n(1+1/f(n))-2(1+1/f(n))$ edges requiring at least $k= (n-2)/f(n)$ \dumb agents to solve \broadcast{}. 
\end{theorem}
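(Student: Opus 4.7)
The plan is to exhibit an explicit family of generalized theta graphs and then directly invoke Theorem~\ref{the:alg-istight}. Specifically, for each sufficiently large $n$ in a suitable infinite subsequence, I will take $G_n$ to be the generalized theta graph $\theta_{d,d,\ldots,d}$ with $\ell = (n-2)/f(n)$ paths, each path having $d = f(n)$ internal vertices. The first step is to check that the vertex count matches: the two poles together with $\ell$ paths of $d$ internal vertices each gives $\ell d + 2 = (n-2) + 2 = n$ vertices, as required.

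Next I will count edges. Each of the $\ell$ paths carries $d+1$ edges (one more edge than internal vertices), for a total of $m = \ell(d+1) = \ell d + \ell$. Substituting $\ell d = n-2$ and $\ell = (n-2)/f(n)$ gives
\[
m \;=\; (n-2) + \frac{n-2}{f(n)} \;=\; (n-2)\Bigl(1+\tfrac{1}{f(n)}\Bigr) \;=\; n\Bigl(1+\tfrac{1}{f(n)}\Bigr) - 2\Bigl(1+\tfrac{1}{f(n)}\Bigr),
\]
which is exactly the edge count claimed in the theorem.

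It then remains to prove that exactly $(n-2)/f(n)$ ignorant agents are required. This is immediate from Theorem~\ref{the:alg-istight} applied to $G_n$, provided its hypothesis that every path has length at least $3$ holds. Since $d = f(n) \to \infty$, for all large enough $n$ we have $d \geq 3$, so the hypothesis is satisfied and the agent requirement is exactly $\ell = (n-2)/f(n)$.

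The only real obstacle is integrality: the construction requires $\ell = (n-2)/f(n)$ and $d = f(n)$ to both be positive integers. I would handle this by restricting the family to the infinite set of $n$ for which $f(n)$ takes an integer value dividing $n-2$; for any $f(n) \to \infty$ such an infinite subsequence exists (if $f$ is not already integer-valued, replace it by $\lfloor f(n) \rfloor$ and then thin $n$ to the arithmetic progression $\{kf(n)+2 : k \in \mathbb{N}\}$ for each value of $\lfloor f(n)\rfloor$). This yields an infinite family of graphs with the claimed parameters, completing the proof.
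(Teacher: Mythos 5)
Your proposal is correct and matches the paper's own proof essentially verbatim: both construct the generalized theta graph with $(n-2)/f(n)$ paths of $f(n)$ internal vertices each and invoke the earlier tightness result for theta graphs to get the exact agent count. Your added care about integrality of $\ell$ and $d$ and the path-length hypothesis of Theorem~\ref{the:alg-istight} is a welcome refinement the paper glosses over, but it does not change the approach.
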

\begin{proof}
Consider the generalized theta graph on $n$ vertices with $(n-2)/f(n)$ paths on  $f(n)$ vertices each.  This graph has $(f(n)+1)((n-2)/f(n))=n-2+(n-2)/f(n)=n(1+1/f(n))-2(1+1/f(n))$ edges, and therefore edge density $1+1/f(n)-(2/n)(1+1/f(n))$.  As we already know, this graph requires at least $(n-2)/f(n)$ \dumb agents to solve \broadcast{}.
\end{proof}

This generic theorem can be implemented with different functions of $n$ as convenient, depending on the desired ratio of agents to vertices.  For example, the following two corollaries give (a) a family of graphs where the number of agents grows polynomially and (b) a family of graphs where the number of agents grows nearly linearly.

\begin{corollary}
    For any $\epsilon$ there is a family of graphs with edge density $<1+\Theta (n^{-\epsilon})$ requiring at least $k = \Theta(n^{1-\epsilon})$ \dumb agents to solve \broadcast{}.
\end{corollary}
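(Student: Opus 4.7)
The plan is an immediate specialization of Theorem~\ref{the:small-edge-density-many-agents} with the concrete choice $f(n) = \lceil n^{\epsilon} \rceil$, for the meaningful range $\epsilon \in (0,1)$. First I would verify that this $f$ satisfies the theorem's hypothesis $f(n) \to \infty$, which is immediate, and that for $n$ sufficiently large the resulting construction---a generalized theta graph with $(n-2)/f(n)$ paths each of length $f(n)$---is well-defined with integer parameters and with path length $\geq 3$. The length-$\geq 3$ condition is the hypothesis of Theorem~\ref{the:alg-istight}, which is what upgrades the upper bound on agents into the matching lower bound required for the ``requiring at least'' phrasing of the corollary.

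Next I would substitute $f(n) = \lceil n^{\epsilon}\rceil$ into the two parameters supplied by Theorem~\ref{the:small-edge-density-many-agents}. The agent count becomes
\[
\frac{n-2}{\lceil n^{\epsilon}\rceil} = \Theta(n^{1-\epsilon}),
\]
matching the claim. For edge density, the expression from the theorem becomes
\[
1 + \frac{1}{f(n)} - \frac{2}{n}\left(1 + \frac{1}{f(n)}\right) = 1 + \frac{1}{n^{\epsilon}} - O(n^{-1}),
\]
and for $\epsilon < 1$ the correction $O(n^{-1})$ is of strictly lower order than $n^{-\epsilon}$, so it is absorbed into a $\Theta(n^{-\epsilon})$ term, yielding the desired bound $1 + \Theta(n^{-\epsilon})$.

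There is no substantive obstacle: the argument is a routine substitution followed by a standard asymptotic estimate. The only minor care needed is in (i) handling the integer rounding of $n^{\epsilon}$ while preserving $\Theta$-notation, and (ii) restricting attention to $\epsilon \in (0,1)$ so that both the number of paths and the path length grow without bound (and in particular exceed $3$ for all sufficiently large $n$, so that Theorem~\ref{the:alg-istight} applies). For $\epsilon \geq 1$ the statement is either vacuous or trivially witnessed by a bounded-agent construction, so the case $\epsilon \in (0,1)$ captures all the content.
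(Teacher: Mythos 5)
Your proposal is correct and follows exactly the paper's approach: the paper's proof is the one-line substitution $f(n)=n^{\epsilon}$ into Theorem~\ref{the:small-edge-density-many-agents}. Your additional care about integer rounding, the range $\epsilon\in(0,1)$, and invoking the tightness result for the lower bound is a more careful elaboration of the same argument, not a different route.
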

\begin{proof}
  Let $f(n)=n^{\epsilon}$ in the above proof.  
\end{proof}

\begin{corollary}
    There is a family of graphs with edge density $1+O(1/\log^*(n))$ requiring at least $k = \Theta (n/\log^* (n))$ \dumb agents to solve \broadcast{}.
\end{corollary}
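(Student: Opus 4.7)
The plan is a direct application of Theorem~\ref{the:small-edge-density-many-agents} with the specific choice $f(n) = \log^*(n)$. First I would verify the only hypothesis: $\log^*(n) \to \infty$ as $n \to \infty$. Although the growth is famously slow (requiring $n$ to cross thresholds like $2$, $2^2$, $2^{2^2}$, $2^{2^{2^2}}$, etc.), this is standard and suffices to invoke the theorem.

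Next I would substitute into the two conclusions of the theorem. For the edge count, the theorem yields $n(1+1/\log^*(n)) - 2(1+1/\log^*(n))$ edges, giving edge density
\[
\frac{m}{n} = 1 + \frac{1}{\log^*(n)} - \frac{2}{n}\left(1 + \frac{1}{\log^*(n)}\right).
\]
The subtracted term is $O(1/n)$, which is dominated by $1/\log^*(n)$ for all sufficiently large $n$, so the edge density is $1 + O(1/\log^*(n))$ as claimed. For the agent count, the theorem directly gives $(n-2)/\log^*(n)$ required \dumb agents, which is $\Theta(n/\log^*(n))$.

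The only mild technicality is that the construction in Theorem~\ref{the:small-edge-density-many-agents} naturally wants $(n-2)/f(n)$ to be an integer specifying the number of paths, each of integer length $f(n)$. I would handle this by restricting to those $n$ for which $\log^*(n)$ divides $n-2$; such $n$ are plentiful because $\log^*(n)$ takes each small integer value on an enormous run of consecutive integers, so within each such run we may pick infinitely many $n$ of the correct residue class. The resulting infinite subsequence still constitutes a \emph{family} of graphs, as required by the statement. There is no substantive obstacle beyond this bookkeeping.
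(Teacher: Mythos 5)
Your proposal is correct and matches the paper's proof exactly: the paper likewise just sets $f(n)=\log^*(n)$ in Theorem~\ref{the:small-edge-density-many-agents}. Your extra checks (that the $O(1/n)$ term is absorbed, and the divisibility bookkeeping for choosing $n$) are fine and only make explicit what the paper leaves implicit.
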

\begin{proof}
    Let $f(n)=\log^*(n)$ in the above proof.
\end{proof}

\section{Arbitrary Density Graphs Requiring Arbitrary Agents}
\label{sec:arb-density-arb-agents}

We now turn to a new type of construction, designed to create graphs with exact arbitrary values of agents required and upper bounds on graph density.  This is in contrast to the previous construction; in theory, since the limit of required agents $\to \infty$ and the edge density $\to 1$, for $n$ sufficiently large the number of agents will be larger than and the edge density less than any desired pair of constants.  However this is an \emph{existence} statement without giving an efficient method of finding the necessary size of $n$.

On the other hand, here we give a direct construction that allows for precise values for $k$, the number of ignorant agents, and arbitrarily low edge density.  This we do by introducing a new family of graphs and making some key observations about strategies on certain types of subgraphs.

\begin{theorem}
\label{the:lollipops-density}
For any natural number $k>1$ and real number $\rho > 1$, there is a graph with edge density $\leq \rho$ that requires at least $k$ \dumb agents to solve \broadcast{}.
\end{theorem}
\begin{proof}

\begin{figure}
    \centering
    \includegraphics[width=0.7\linewidth]{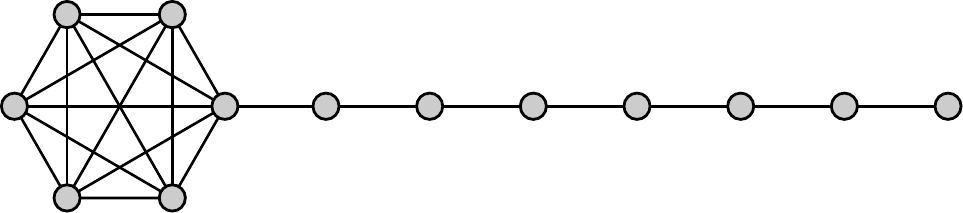}
    \caption{An example of the construction from Theorem~\ref{the:lollipops-density}. A lollipop graph with a clique $K_{k+3} = K_6$ and $s = 7$ edges in the path, and edge density $\leq 1.69231$. At least $k = 3$ ignorant agents are required to solve \broadcast{} on it. If we wish to reduce the edge density even further while maintaining the value of $k$, we may increase the value of $s$.}
    \label{fig:thrm-7}
\end{figure}

    Construct a ``lollipop graph'', that is, a graph constructed by identifying one node in a clique and using it as the the endpoint of a path.  For ease of notation, we will use $K_{k+3}$ for the clique on $k+3$ vertices and $P_s$ for the path with $s$ edges in this construction. For an illustrative example, refer to Figure~\ref{fig:thrm-7}, where $k = 3$ and $s = 7 $.  Then the constructed graph has $n = s+k+3$ vertices and $s+\binom{k+3}{2}$ edges.  The edge density of this graph is thus 
    \begin{align}
        \frac{s +\binom{k+3}{2}}{s+k+3}
        =\frac{2s+k^2+5k+6}{2s+2k+6}
    \end{align}
    which we want to be $\leq \rho$.  Observe that as $s \to \infty$, this fraction tends to $1$.  Therefore for $s$ sufficiently large it must be $< \rho$ for any $\rho >1$.  Straightforward computation finds the specific lower bound on $s$:
    \begin{align}
    \frac{2s+k^2+5k+6}{2s+2k+6} \leq \rho\\
   2s+k^2+5k+6 \leq \rho(2s+2k+6)\\
   2(1-\rho)s\leq 2\rho k +6\rho -k^2-5k-6\\
   s\geq \frac{2\rho k+ 6\rho-k^2-5k-6}{2(1-\rho)}=\frac{k^2+5k+6-2\rho k - 6\rho}{2(\rho-1)}
    \end{align}

    (Note the inequality reverses direction when dividing by $2(1-\rho)$, a negative number). If $1<\rho<2$, letting $s=\lceil\frac{k^2+5k+6}{\rho-1}\rceil$ is enough.  For $\rho\geq 2$, $s=k^2+5k+6$ suffices.

    If there are fewer than $k$ \dumb agents, the adversary prevents broadcast from occurring by arranging all \dumb agents and the \smart agent in the $K_{k+3}$ subgraph as described in the proof of Theorem~16 in~\cite{DGLM20}.

   If there are $k$ or more \dumb agents, the agents solve broadcast by initially moving into the $K_{k+3}$ subgraph (they may do this because path edges may not be removed by the adversary, so any agents on the path may move to the clique).  Once all agents are in the $K_{k+3}$ subgraph, they follow the strategy described in the proof of Theorem~16 in~\cite{DGLM20}.
\end{proof}

\section{Examining Other Graph Properties}
\label{sec:other-graph-properties}
After exploring the relationship between edge density and the number of agents required, we now move on to other interesting graph properties. As the adversary's only restriction is leaving the graph connected in each round, a natural question is ``does a graph's connectivity determine the number of agents required?'' 

Here we study how connectivity influences the adversary's strategy.  We give some graph conditions in terms of connectivity and degree which generate a winning strategy for the adversary, i.e., a strategy that allows the adversary to prevent \broadcast{} from being solved, and explicitly give that strategy.  For some of these graphs, we also give a complementary winning strategy for \broadcast{} when the number of agents is above an explicit threshold.

We give a construction of a family of graphs with lower edge connectivity than agents required, where the connectivity and required agents are functions of each other.  We first state it with maximum generality and then give a few example corollaries with convenient values for connectivity and agents required.

\begin{theorem}\label{the:edge-connectivity}
For any natural numbers $\lambda>1$ and $n> 2\lambda$, there exists a graph on $n$ vertices with edge connectivity $\lfloor \frac{n-1}{\lambda}\rfloor$ and with edge density $\frac{(n-1-(\lambda \lfloor \frac{n-1}{\lambda}\rfloor))\binom{\lceil \frac{n-1}{\lambda}\rceil+1}{2}+(\lambda-(n-(\lambda \lfloor \frac{n-1}{\lambda}\rfloor)))\binom{\lfloor \frac{n-1}{\lambda}\rfloor+1}{2}}{n}\simeq \frac{n}{2\lambda}$ requiring at least $k = n-2\lambda$ \dumb agents to solve \broadcast{}. \end{theorem}

\begin{proof}  Divide $n-1$ vertices into $\lambda$ parts as in the Tur\'an graph; that is, $(n-1-(\lambda \lfloor \frac{n-1}{\lambda}\rfloor))=r$ (the remainder when $n-1$ is divided by $\lambda$) groups with $\lceil \frac{n-1}{\lambda}\rceil=n/\lambda +\epsilon_1$ vertices for some $-1<\epsilon_1<1$ and $\lambda-(n-(\lambda \lfloor \frac{n-1}{\lambda}\rfloor))$ groups with $\lfloor \frac{n-1}{\lambda}\rfloor=\frac{n}{\lambda}+\epsilon_2$ vertices  for some $-1<\epsilon_2<1$.  However, construct the complement of the Tur\'an graph, i.e., each part becomes its own clique.  Then add one more vertex which is adjacent to all $n-1$ vertices. An example with $n=18$ and $\lambda=3$ is given in Figure \ref{fig:thm8}.

\begin{figure}
    \centering
    \includegraphics[width=.7\linewidth]{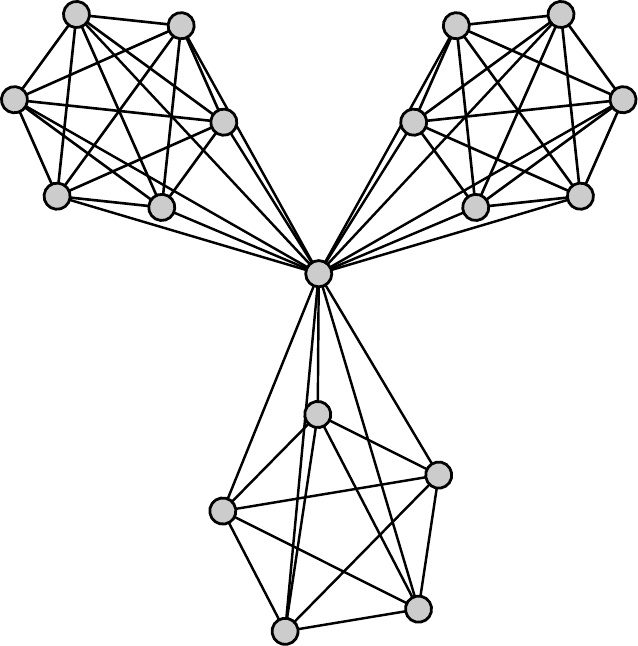}
    \caption{Graph constructed when $n=18$ and $\lambda=3$.  Observe that two of the subcliques have $\lceil 17/3\rceil=6$ vertices and one has $\lfloor 17/3 \rfloor=5$ vertices.}
    \label{fig:thm8}
\end{figure}

The edge density of this graph is  \begin{align}
  \frac{(r\binom{\lceil \frac{n-1}{\lambda}\rceil+1}{2}+(\lambda - r)\binom{\lfloor \frac{n-1}{\lambda}\rfloor+1}{2}}{n}\\
  =\frac{r\left(\left(\lceil \frac{n-1}{\lambda}\rceil+1\right)\left(\lceil \frac{n-1}{\lambda}\rceil\right)\right)+\left(\lambda-r\right)\left(\lfloor \frac{n-1}{\lambda}\rfloor+1\right)\left(\lfloor \frac{n-1}{\lambda}\rfloor\right)}{2n}\\
  =\frac{r\left(\left(\frac{n}{\lambda}+\epsilon_1+1\right)\left(\frac{n}{\lambda}+\epsilon_1\right)\right)+\left(\lambda-r\right)\left(\frac{n}{\lambda}+\epsilon_2+1\right)\left( \frac{n}{\lambda} +\epsilon_2\right)}{2n}\\
 =\frac{n}{2\lambda}+\epsilon
\end{align}
where $\epsilon$ is some error term with absolute value $\leq 3/2$. 

 Furthermore, the adversary can follow a winning strategy similar to that in the proof of Theorem \ref{the:lollipops-density} by placing $j-2$ \dumb agents in each clique of size $j$, and the agents can successfully broadcast once at least one subclique contains $j-1$ agents. 
 A pigeonhole argument shows that the number of \dumb agents needed is at least $n-2\lambda$.
\end{proof}

\begin{corollary}
For arbitrary $\delta$ there is a family of graphs on $n$ vertices, $ n \to \infty$, with edge density \emph{and} edge connectivity $\Theta(n^{\delta})$  requiring at least $k = n-\Theta(n^{1-\delta})$ \dumb agents to solve \broadcast{}.
\end{corollary}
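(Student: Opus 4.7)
The plan is to apply Theorem~\ref{the:edge-connectivity} with a carefully chosen $\lambda = \lambda(n)$ so that both the edge connectivity and edge density land in $\Theta(n^{\delta})$. Setting $\lambda \approx n^{1-\delta}$ gives edge connectivity $(n-1)/\lambda = \Theta(n^{\delta})$ and edge density $\simeq n/(2\lambda) = \Theta(n^{\delta})$, and yields an agent requirement of $n - 2\lambda + 1 = n - \Theta(n^{1-\delta})$. So the result follows essentially by instantiation.

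The one subtlety is the divisibility hypothesis $\lambda \mid n-1$ required by Theorem~\ref{the:edge-connectivity}. Rather than attempting to force divisibility for every $n$, I would construct the family parametrically: for each positive integer $m$, set $\lambda_m := m^{1-\delta}$ rounded to the nearest integer (breaking ties arbitrarily), and then take $n_m := \lambda_m \cdot q_m + 1$ where $q_m := \lfloor (m-1)/\lambda_m \rfloor$. This choice automatically satisfies $\lambda_m \mid n_m - 1$, and one checks that $n_m = m + O(\lambda_m) = m(1 + o(1))$, so the family $\{n_m\}_{m \geq 1}$ is cofinal in $\mathbb{N}$ and still satisfies $\lambda_m = \Theta(n_m^{1-\delta})$. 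The family of graphs asserted by the corollary is then the sequence obtained by applying Theorem~\ref{the:edge-connectivity} to each pair $(n_m, \lambda_m)$.

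With divisibility in hand, the remaining verifications are routine: the edge connectivity is $(n_m - 1)/\lambda_m = q_m = \Theta(n_m^{\delta})$; the edge density, using the expression $\binom{(n_m-1)/\lambda_m + 1}{2}/(\lambda_m n_m)$ from Theorem~\ref{the:edge-connectivity}, expands to $\Theta(q_m^2 \lambda_m / n_m) = \Theta(n_m^{2\delta} \cdot n_m^{1-\delta} / n_m) = \Theta(n_m^{\delta})$; and the required agent count is $n_m - 2\lambda_m + 1 = n_m - \Theta(n_m^{1-\delta})$.

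I expect the main obstacle to be purely bookkeeping around the divisibility/rounding step, since the underlying theorem already does all the heavy lifting. The only thing to be careful about is that after rounding, the asymptotic equalities $\Theta(n^{\delta})$ for connectivity and density remain simultaneously valid; this is where I would be most explicit in the final write-up, to confirm that the $\Theta$ constants for the two quantities are not hiding a discrepancy introduced by the integer constraint on $\lambda$.
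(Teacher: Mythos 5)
Your proposal is correct and takes essentially the same route as the paper, whose entire proof is to set $\lambda = n^{1-\delta}/2$ in Theorem~\ref{the:edge-connectivity}; your extra bookkeeping for the divisibility condition $\lambda \mid n-1$ (which the paper silently glosses over) is a harmless refinement, not a different approach.
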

\begin{proof}
    Let $\lambda=n^{1-\delta}/2$ in the above theorem.
\end{proof}

\begin{corollary}
For arbitrary $0<c<1$ there is a family of graphs on $n$ vertices, $n \to \infty$, with constant edge density \emph{and} edge connectivity, requiring at least $cn-1$ \dumb agents.
\end{corollary}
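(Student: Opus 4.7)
The plan is to instantiate Theorem~\ref{the:edge-connectivity} with $\lambda$ scaling linearly in $n$, so that the ratio $q := (n-1)/\lambda$ remains a fixed integer. Under this scaling, the edge connectivity is exactly $q$ and the edge density stabilises at its asymptotic value $n/(2\lambda) = (q+1)/2 + o(1)$; both are constants independent of $n$. Meanwhile, the number of agents required equals $n - 2\lambda + 1 = (1 - 2/q)\,n + (1 + 2/q)$, which is $(1 - 2/q)\,n + O(1)$.

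Concretely, for a given target $0 < c < 1$, I would choose a positive integer $q$ with $1 - 2/q$ equal to (or as close as desired to) $c$; the value $q = 2/(1-c)$ works exactly when this is an integer at least $3$. To satisfy the divisibility hypothesis $\lambda \mid n-1$ of Theorem~\ref{the:edge-connectivity}, I would restrict to the infinite subsequence $\{n : n \equiv 1 \pmod q\}$ and set $\lambda = (n-1)/q$ for each such $n$. Plugging into the theorem gives, for every $n$ in this subsequence, a graph with edge connectivity exactly $q$, edge density $(q+1)/2 + o(1)$, and requiring $(1-2/q)\,n + O(1) = cn + O(1)$ agents to solve \broadcast{}.

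The only point requiring care is that an arbitrary real $c \in (0,1)$ need not be expressible as $1 - 2/q$ for an integer $q$, so the equation ``$cn$ agents'' has to be read asymptotically, in the same spirit as the $\Theta$-style phrasing used in the neighbouring corollary. This is not a substantive obstacle: one simply picks the integer $q$ so that $|c - (1-2/q)|$ is as small as one wishes and notes that the edge connectivity and edge density depend only on $q$, not on $n$. Thus the entire corollary is a direct bookkeeping application of Theorem~\ref{the:edge-connectivity} along the appropriate arithmetic progression of values of $n$.
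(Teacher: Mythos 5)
Your proposal is correct and takes essentially the same route as the paper: a direct instantiation of Theorem~\ref{the:edge-connectivity} with $\lambda = \Theta(n)$, the paper simply choosing $\lambda = (n - cn + 1)/2$ so that $n - 2\lambda + 1 = cn$ exactly. Your version, which fixes the integer ratio $q = (n-1)/\lambda$ and restricts to $n \equiv 1 \pmod q$, is if anything more careful about the divisibility hypothesis and the integrality of $c$ than the paper's one-line substitution, at the cost of only achieving $cn + O(1)$ agents for $c$ approximated by $1 - 2/q$.
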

\begin{proof}
Let $\lambda=\lfloor\frac{n-cn+1}{2}\rfloor$ in the above theorem.
\end{proof}


 We now turn our attention to vertex connectivity. Inspired by Lemma~17 of \cite{DGLM20}, we present a simple tool to establish a lower bound on $k$ for any graph that is at least 3-vertex-connected.  

\begin{theorem}\label{the:vertex-connectivity}
    Given a 3-vertex-connected graph $G = (V,E)$ with minimum degree $\delta$, \textsc{Broadcast} is unsolvable with $k \leq \delta - 2$ \dumb agents.
\end{theorem}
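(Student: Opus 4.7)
The plan is to exhibit an explicit adversary strategy that prevents the \smart agent from ever being co-located with any \dumb agent. I would set up the initial configuration as follows: let $v^{*}$ be a vertex with $\deg_G(v^{*}) = \delta$; the adversary places the \smart agent at $v^{*}$ and the $k \leq \delta - 2$ \dumb agents on $k$ distinct neighbors of $v^{*}$. Because $v^{*}$ has $\delta \geq k+2$ neighbors, at least two of them remain unoccupied.

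The core of the strategy would be to maintain the following inductive invariant at the start of every round $t$: there is a spanning tree $T_t$ of $G$ such that (i) the \smart agent's position $s_t$ is a leaf of $T_t$ with unique tree-neighbor $p_t$, (ii) $p_t$ has exactly one further tree-neighbor $q_t$, and (iii) neither $p_t$ nor $q_t$ is occupied by a \dumb agent. In round $t$ the adversary simply deletes every edge of $G$ outside $T_t$, leaving the connected graph $T_t$.

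To verify the base case $t=1$, I would pick $p_1$ to be any unoccupied neighbor of $v^{*}$ and $q_1$ to be a non-$v^{*}$, non-\dumb neighbor of $p_1$; both exist by the degree bound ($p_1$ has $\geq \delta$ neighbors, at most $\delta - 2$ of which are \dumb and one of which is $v^{*}$). Since $G$ is $3$-vertex-connected, $G - \{v^{*}, p_1\}$ is connected, and any spanning tree of that subgraph together with the edges $v^{*} p_1$ and $p_1 q_1$ gives $T_1$. For the inductive step, I would observe that the only tree-neighbor of $s_t$ in $T_t$ is $p_t$ (\dumb-free) and the only tree-neighbors of $p_t$ are $s_t$ and $q_t$ (both \dumb-free), so no \dumb agent has a legal move onto $s_t$ or $p_t$. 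Consequently $s_{t+1} \in \{s_t, p_t\}$ is \dumb-free, and the same degree count applied at $s_{t+1}$ yields $p_{t+1}$ and then $q_{t+1}$; $3$-vertex-connectivity of $G$ ensures $G - \{s_{t+1}, p_{t+1}\}$ is connected, which is all that is needed to build $T_{t+1}$. Since the \smart agent never shares a vertex with any \dumb agent, no \dumb agent ever becomes \smart, and \broadcast{} cannot complete.

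I expect the main difficulty to lie in designing the invariant. A weaker version such as ``the \smart agent's unique tree-neighbor is \dumb-free'' does not suffice: if the \smart agent moves to $p_t$ while a \dumb agent at some other tree-neighbor of $p_t$ moves to $p_t$ simultaneously, they collide. Forcing $p_t$ itself to have degree $2$ in $T_t$ and its other tree-neighbor $q_t$ to also be \dumb-free plugs precisely this leak, and the combination of the bound $k \leq \delta - 2$ (which guarantees enough non-\dumb neighbors to pick $p_t$ and $q_t$) with $3$-vertex-connectivity (which guarantees connectedness of $G - \{s_t, p_t\}$) is exactly what makes such a structured spanning tree always constructible.
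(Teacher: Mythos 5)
Your proposal is correct and follows essentially the same strategy as the paper: the adversary keeps the \smart agent as a leaf of a surviving spanning tree, buffered by two unoccupied vertices (your $p_t, q_t$ are the paper's $u, r$), found via the minimum-degree count and connected to the rest via a spanning tree of $G \setminus \{s, u\}$ guaranteed by $3$-connectivity. Your write-up is somewhat more explicit than the paper's about the inductive invariant and about why simultaneous moves cannot cause a collision at $p_t$, but the underlying argument is the same.
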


\begin{proof}
    The adversary can initially pick a random vertex $s \in V$ and place the \smart agent there. 
    The degree of $s$ is at least $\delta$, however, there are only $\delta - 2$ many \dumb agents, so there are at least two neighbors of $s$ that do not have an agent. The adversary then picks one of these. Without loss of generality, call it $u$. Then $u$ has at least $\delta$ neighbors, and there are at most $\delta - 2$ many agents (\dumb or \smart) other than the agent at $s$ in $G$. Therefore, there is a vertex other than $s$ adjacent to $u$ that does not have an agent. Call this vertex $r$. Since $G$ is 3-connected, the adversary can compute a spanning tree on $G \setminus \{u,s\}$ rooted at $r$. The adversary deletes all edges apart from the edges $\{s, u\}$ and $\{u, r\}$ and the edges in this spanning tree. Now, there are two unoccupied vertices between $s$ and any other occupied vertex of $G$. No matter the position of the \smart agent, this is always possible, hence the adversary can repeat this process indefinitely and an \dumb agent will never meet the \smart. 
\end{proof}


Finally, we study the power of a specific type of bond in a graph for a generalized situation where there might be multiple \smart agents initially. In order to proceed, we remind  the reader of the definition of a \textit{bond} of a graph, which is a set of edges such that if all the edges in the set are removed the graph becomes disconnected, but if all but one edge is removed the graph is still connected.  In other words, a bond is a minimal cut set of edges.  Here we consider \emph{only} bonds that are also matchings, i.e., the endpoints of the bond's edges are all distinct.  We call these ``matching bonds''.

\begin{theorem}\label{the:bonds}
If a graph $G$ has a matching bond with $m$ edges, then the adversary has a winning strategy when there are initially $\leq k_1$ \dumb agents and $\leq k_2$ \smart agents for any $k_1+k_2 \leq m-1$. 

\end{theorem}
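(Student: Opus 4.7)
The plan is to exhibit an adversary strategy that forever confines all \smart agents to one side of the matching bond and all \dumb agents to the other. Let $B^{*} = \{e_1, \dots, e_m\}$ with $e_j = a_j b_j$ denote the matching bond. From minimality of $B^{*}$ as a cut I would extract two facts: removing $B^{*}$ splits $G$ into exactly two components, call them $A$ (containing the $a_j$'s) and $B$ (containing the $b_j$'s); and the adversary may remove any $m-1$ of the bond edges in a round while keeping $G$ connected. The matching hypothesis additionally guarantees that the $a_j$'s are pairwise distinct and so are the $b_j$'s, a fact that will drive the counting step.

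The adversary first places all at most $k_2$ \smart agents on vertices of $A$ and all at most $k_1$ \dumb agents on vertices of $B$. The invariant I would carry through time is that at the start of every round every \smart agent lies in $A$ and every \dumb agent lies in $B$. Under this invariant, at each round call a bond edge $e_j$ \emph{blocked} if $a_j$ currently holds a \smart agent or $b_j$ currently holds a \dumb agent, and have the adversary delete every bond edge except an unblocked one (and no non-bond edges). A counting argument shows that an unblocked edge always exists: distinctness of the $a_j$'s lets each \smart agent block at most one edge via its $A$-endpoint, and symmetrically for \dumb agents via $b_j$'s, so the total number of blocked edges is at most $k_1 + k_2 \leq m-1 < m$.

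Preservation of the invariant is then immediate by inspecting the round: the unique surviving bond edge $e_j$ has no \smart at $a_j$ and no \dumb at $b_j$, so no \smart-to-$B$ or \dumb-to-$A$ crossing can occur during agents' movements, while every other move stays inside $A$ or inside $B$. Iterating over rounds, \smart and \dumb agents never share a vertex nor traverse a common edge, so no \dumb agent ever receives $\MSG$ and \broadcast{} is unsolvable. I expect the main subtlety to be the counting step, because it is exactly there that the \emph{matching} hypothesis (as opposed to a general bond) is indispensable: if some $a_j$ coincided with an $a_{j'}$, a single \smart agent could block two bond edges at once, and the bound $k_1+k_2 \leq m-1$ would no longer suffice to guarantee an unblocked edge.
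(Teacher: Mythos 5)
Your proposal is correct and follows essentially the same strategy as the paper's proof: confine the two classes of agents to opposite sides of the matching bond, use the distinctness of endpoints to show at most $k_1+k_2 \leq m-1$ bond edges can be ``blocked'' by agents at their endpoints, and keep an unblocked bond edge alive to preserve connectivity while preventing any crossing. The only differences are cosmetic (which side hosts which agent type, and deleting all but one bond edge rather than exactly the blocked ones).
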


\begin{proof}
    For ease of notation, let the edges in the matching bond be $u_1v_1, u_2v_2, \ldots u_mv_m$ where all the $u_i$ are on one side, $A$, of the bond and the $v_i$ are all on the other side, $B$.  Recall that the adversary chooses the initial position of all agents.  In this instance, the adversary places the \dumb agents on distinct vertices in $A$ and the \smart agents on distinct vertices in $B$.  (Because the bond has $m$ edges, there are at least $m$ vertices in $A$ and $m$ in $B$; thus there are enough vertices to do this).

    At every round, the adversary removes any edge $u_iv_i$ that contains an agent on $u_i$ or $v_i$.  Observe that for any agent to cross from $A$ to $B$ (respectively, $B$ to $A$), it must first reach some $u_i$ ($v_i$) vertex and then move to some $v_i$ ($u_i$).  Therefore this adversary strategy prevents any agent initially placed in $A$ from reaching $B$ or vice-versa. Furthermore, since at most $m-1$ edges of the bond are removed at any step, the graph remains connected. Since \dumb agents may only visit vertices in $A$ and \smart agents may only visit vertices in $B$, this proves that the adversary wins.
\end{proof}

We give a corollary for the situation in the original \broadcast{} problem where there is initially only $1$ \smart agent.
\begin{corollary}\label{cor:bonds}
    If a graph $G$ has a matching bond with $m$ edges, then the adversary has a winning strategy for $\leq k$ \dumb agents when $k \leq m-2$.
\end{corollary}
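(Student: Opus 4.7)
The plan is to derive Corollary~\ref{cor:bonds} as an immediate specialization of Theorem~\ref{the:bonds}. In the \broadcast{} problem there is exactly one agent initially carrying $\MSG$, so I would instantiate the two parameters of the theorem by setting $k_2 = 1$ and $k_1 = k$. The matching bond required by Theorem~\ref{the:bonds} is supplied by the hypothesis of the corollary; the surrounding discussion in Section~\ref{sec:other-graph-properties} has already restricted attention to bonds that are also matchings, so this substitution is legal without any further structural manipulation of $G$.

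With these values, the theorem's hypothesis $k_1 + k_2 \leq m - 1$ becomes $k + 1 \leq m - 1$, i.e.\ $k \leq m - 2$, which is exactly the hypothesis of the corollary. The conclusion of Theorem~\ref{the:bonds}---that the adversary has a winning strategy---then transfers verbatim, yielding a winning strategy for the adversary against the single initial \smart agent and any $k \leq m - 2$ \dumb agents.

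There is essentially no obstacle beyond lining up indices. The initial placement used in the proof of Theorem~\ref{the:bonds}---the lone \smart agent on one side of the bond and the $k$ \dumb agents on distinct vertices of the other side---is feasible because the matching property guarantees $m$ distinct vertices on each side of the bond and we have only $k + 1 \leq m - 1 < m$ agents to place. The per-round deletion rule (remove each bond edge that has an agent at one of its endpoints) preserves connectivity because at most $k + 1 \leq m - 1$ of the $m$ bond edges can be targeted, so at least one bond edge survives in $G_i$ and the graph remains connected as the adversary requires. Hence the \smart agent and the \dumb agents remain permanently separated across the bond, and \broadcast{} never completes.
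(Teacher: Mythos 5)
Your proposal is correct and is exactly the derivation the paper intends: the corollary is stated without proof as the specialization of Theorem~\ref{the:bonds} to $k_2=1$ and $k_1=k$, turning $k_1+k_2\leq m-1$ into $k\leq m-2$, and your checks on placement feasibility and connectivity mirror the theorem's proof. The only caveat---that the argument needs the bond to be a matching bond, a hypothesis present in Theorem~\ref{the:bonds} but silently dropped from the corollary's statement---is one you correctly flag, and it is an imprecision inherited from the paper rather than a gap in your reasoning.
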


\section{Conclusions}
\label{sec:conc}
In this paper, we pursued a more fine-grained study of edge density and its relationship to the number of agents needed to solve \broadcast{}. We showed that, surprisingly, edge density does not clearly act as a differentiator for graphs as to whether $o(n)$ or $\Omega(n)$ agents are needed to solve \broadcast{}. After exploring this property in detail, we turned our attention to various properties of connectivity. We were able to draw some interesting relationships between the number of agents needed and edge connectivity, vertex connectivity, and bond size. 

However, it is still not clear what graph property, if any, acts as a clear distinguisher for the number of agents required to solve the problem. We hope that by highlighting that this problem is not as clear cut as previously thought, we may spur further research into this and into the relationship of various other graph properties to the solvability of \broadcast{}.

\bibliographystyle{elsarticle-num}
\bibliography{giantref}


\newpage
\appendix
\section{Notation}
\label{app:notation}
Below is a centralized reference for the notation found throughout the paper:
\begin{itemize}
    \item $\theta_{d_1, d_2, \ldots, d_{\ell}}$ - generalized theta graph that consists of two fixed nodes joined by $\ell$ paths, where each path $i$ contains $d_i \geq 1$ internal nodes. Totally, the graph has $\sum_i^\ell d_i + 2$ nodes and edge density $\sum_i^\ell (d_i + 1)/(\sum_i^\ell d_i + 2)$.
    \item \emph{pole} - A term used for theta graphs to refer to the two nodes that are the endpoints of each path
    \item $N$ and $S$ - typical notation used to represent the poles of a theta graph (North and South)
    \item $\rho$ - edge density of a graph.
    \item $\epsilon$ - used sometimes when talking about edge density, in particular $\rho = 1 + \epsilon$.
    \item $n$ - number of nodes in a graph.
    \item $m$ - number of edges in a graph.
    \item $k$ - number of \dumb agents initially present in a graph.
    \item $\delta$ - minimum degree of any node of a graph
    \item $\MSG$ - message present with \smart agents.
    \item \broadcast{} - the problem where $k$ \dumb agents and $1$ \smart agent are initially placed at distinct locations on the graph, and the final goal is for the $k$ \dumb agents to receive a message and become \smart  agents. Often when talking about the problem, we only mention the number of \dumb agents, $k$, initially as it is assumed that we always start with $1$ \smart agent.
\end{itemize}

\end{document}